\newtheorem{thm}{Theorem}[section]
\theoremstyle{definition}
\newtheorem{remark}[thm]{Remark}
\numberwithin{equation}{section}
\newcommand{\X}{{\cal X}}
\newcommand{\R}{\mathbb{R}}
\newcommand{\eps}{\varepsilon}
\newcommand{\del}{\delta}
\newcommand{\g}{\mathfrak{g}}
\newcommand{\ra}{\rightarrow}
\newcommand{\lra}{\longrightarrow}
\newcommand{\rel}{\rightarrowtriangle}
\newcommand{\Sec}{\operatorname{Sec}}
\newcommand{\id}{\operatorname{id}}
\newcommand{\dd}{\mathrm{d}} 
\newcommand{\sT}{\textrm{T}}
\newcommand{\T}{\textrm{T}} 
\newcommand{\pa}{\partial} 
\newcommand{\p}{\operatorname{p}}
\newcommand{\wh}{\widehat}
\newcommand{\wt}{\widetilde}
\newcommand{\TT}{\mathcal {T}} 
\newcommand{\WW}{\mathcal {W}} 
\newcommand{\C}{\mathcal {C}} 
\newcommand{\PP}{\mathcal {P}} 
\newcommand{\LL}{\mathcal{L}}
\def\<#1>{\big\langle #1\big\rangle} 
\begin{document}

\title{Generalized Livens theorem and constrained Hamiltonian dynamics on algebroids}
\author{Micha\l \ J\'o\'zwikowski\\
 Institute of Mathematics, Polish Academy of Sciences\\\'Sniadeckich 8,
P.O. Box 21, 00-956 Warszawa, Poland\\{\tt mjoz@impan.gov.pl}
}
\maketitle
\begin{abstract}
We generalise Livens theorem, showing that Hamiltonian equation on the vector bundle $E^\ast\rightarrow M$, dual to a general algebroid $E\rightarrow M$, can be derived by means of a variational principle. The framework can be used to describe Hamiltonian dynamics associated with (both vaconomic and nonholonomic) constraints in the bundle $E$.
\bigskip

\noindent\emph{MSC 2010: 70H05, 70H30 , 70G75, 70H45, 53D17, 17B66.}
\bigskip

\noindent
\emph{Key words: Hamiltonian systems, constrained dynamics,  Lie algebroids, nonholonomic mechanics, variational principle.}
\end{abstract}

\section{Introduction}

\paragraph{Lagrangian mechanics on general algebroids}
The importance of a Lie algebroid structure in the formulation of Geometric Mechanics is now commonly accepted. The basic reason is that Lie algebroids cover, on the one hand, the structure of the tangent bundle of a manifold and, on the other hand, geometric structures which arise from a reduction by a symmetry group such as a Lie algebra of a Lie group or, more generally, a space $\T P/G$ of $G$-invariant vectors on a principal bundle $G\ra P\ra M$. Consequently, the formulation of Geometric Mechanics on Lie algebroids allows one to treat many natural situations in frames of a single universal formalism \cite{L,W}. 

Such a formulation was proposed by many authors \cite{CLMDM, GGU, LMM,IMDS,IMPS, M1, M2}. In this paper we will base on an approach by Grabowska, Grabowski and Urbański \cite{GGU,GG}, in which the dynamics associated with an algebroid $E$ is derrived by means of a certain double vector bundle morphism, naturally generalising the celbrated Tulczyjew triple \cite{Tul}. These authors were also the first to realise that neither the Jacobi identity, nor the skew-symmetry of the bracket is necessary to generate the dynamics. Consequently, their framework can be formulated also for general algebroids (cf. \cite{GU}). Such a generalization proved to be useful, for instance, in the context of nonholonomic constraints \cite{GLMM}. 

A natural problem is to incorporate the constraints into the framework of geometric mechanics on algebroids. This can be relatively easily done for a Lagrangian formalism. Namely as shown in \cite{GG} the generalised Euler-Lagrange equations can be derived by means of variational principle for $E$-valued curves. Variational calculus is here understood as the study of the differential $\dd S_L(\gamma)$ of an action $$S_L(\gamma):=\int_{t_0}^{t_1}L(\gamma(t))\dd t,$$ 
where $\gamma:[t_0,t_1]\ra E$ is an $E$-valued curve. For a variation $\del\gamma\in \T_\gamma E$ we simply have 
$$\<\del\gamma,\dd S_L(\gamma)>:=\int_{t_0}^{t_1}\<\del\gamma(t),\dd L(\gamma(t))>\dd t.$$
Our attention is restricted to certain curves $\gamma$, called \emph{admissible trajectories} and certain variations $\del \gamma$, called \emph{admissible variations}, which can be naturally constructed from an algebroid structure on $E$ (see section \ref{sec:var_calc} for details). Constraints can be now understood as restrictions in the set of admissible variations and/or trajectories. Constrained Lagrangian dynamics on algebroids can be now derived from a variational principle for these restricted sets \cite{GG}.

For the constrained version of the Hamiltonian dynamics there is so far no such an easy approach, even if an algebroid is simply a tangent bundle. Typically the constrained Hamiltonian equations are derived from the constrained Euler-Lagrange equations via coordinate calculations under an additional assumption of hyperregularity of the Lagrangian. This has been done by \cite{PT} for the case of vaconomic constraints and by \cite{BS, SM} for nonholonomic constraints. 

\paragraph{Livens theorem}
In this paper we propose another method of generating the constrained Hamiltonian dynamics on general algebroids. Our motivation comes from a classical Livens theorem stating that the unconstrained Hamiltonian dynamics on $\T^\ast M$ has a variational formulation. 
Namely, the Hamiltonian trajectories can be described as the critical points of the action
$$S_H(x(\cdot),p(\cdot)):=\int_{t_0}^{t_1}\<\dot x(t),p(t)>-H(x(t),p(t))\dd t,$$
with respect to variations $(\del x(\cdot ),\del p(\cdot))$ induced by homotopies in $\T^\ast M$.
(We use standard notation $(x^a)$ for coordinates on $M$, $(x^a,\dot x^b)$ and $(x^a,p_b)$ for natural coordinates induced on $\T M$ and $\T^\ast M$).   

\paragraph{Our results}
If we reformulate the above results in terms of the $\T M\oplus_M\T^\ast M$-trajectories $(x(\cdot),y(\cdot),p(\cdot))$, namely as a study of the action 
$$\wt S_H(x(\cdot),y(\cdot), p(\cdot)):=\int_{t_0}^{t_1}\<y(t),p(t)>-H(x(t),p(t))\dd t,$$
under an additional admissibility condition $y(t)=\dot x(t)$, Livens theorem can be easily generalized to the context of general algebroids as is done in theorem \ref{thm:var_ham}. To be more precise, given a Hamiltonian function $H:E^\ast\ra\R$, we can define a function $\LL_H:E\oplus_ME^\ast\ra\R$ by the formula
$$\LL_H(Y):=\<\p_E(Y),\p_{E^\ast}(Y)>-H(\p_{E^\ast}(Y)),$$
where $\p_E:E\oplus_ME^\ast\ra E$ and $\p_{E^\ast}:E\oplus_ME^\ast\ra E^\ast$ are natural projections. Now for a trajectory $\Gamma:[t_0,t_1]\ra E\oplus_ME^\ast$ we can define the action
$$ S_{\LL_H}(\Gamma):=\int_{t_0}^{t_1}\LL_H(\Gamma(t))\dd t.$$ We restrict our attention to trajectories $\Gamma$ such that $\gamma:=\p_E\circ\Gamma:[t_0,t_1]\ra E$ is an admissible trajectory in $E$ (in the sense of \cite{GG}) and variations $\del \Gamma:[t_0,t_1]\ra \T_\Gamma(E\oplus_ME^\ast)\approx \T_{\p_E(\Gamma)}E\oplus_{\T M}\T_{\p_{E^\ast}(\Gamma)}E^\ast$ such that $\p_{\T E}\circ\del \Gamma:[t_0,t_1]\ra \T_\gamma E$ is an admissible variation in $\T E$ (in the sense of \cite{GG}). Extremals $\Gamma$ of  the action $S_{\LL_H}(\cdot)$ under these restrictions give Hamiltonian dynamic on $E^\ast$ defined by a vector field  $\X_H:=\iota_{\dd H}\Pi_E$, where $\Pi_E$ is a contravariant 2-tensor on  $E^\ast$ equivalent to the presence of an algebroid structure on $E$. Precisely, the $E^\ast$-part $\p_{E^\ast}\circ \Gamma$ is a Hamiltonian trajectory, and the $E$-part $\p_E\circ\Gamma=\T^\ast\pi\circ \dd H$ can be understood as a momenta - velocity correspondence (Legendre map). 

\paragraph{Constrained Hamiltonian dynamics}
 As we mentioned above, the derivation of the constrained Lagrangian dynamics on an algebroid $E$ was possible by restricting the sets of admissible variations and trajectories. On the other hand, the (unconstrained) Hamiltonian dynamics is obtained via a variational approach which uses the unconstrained admissible variations and trajectories on $E$. A natural idea is hence to use the variational principle for the Hamiltonian dynamics for sets of admissible variations and trajectories restricted by the constraints. In consequence, we obtain a constrained version of the Hamiltonian dynamics on general algebroids. The results agree with the equations derived by \cite{PT,BS,SM}, although in contrast to these papers we start from an arbitrary Hamiltonian function, not assuming hyperregularity. 

\paragraph{Organization of the paper} After introducing general algebroids in section \ref{sec:alg}, we sketch an abstract and quite a general approach to variational calculus (section \ref{sec:var_calc}). In section \ref{sec:lag_ham_alg} we show its particular realization in the case of Lagrangian dynamics on a general algebroids (following the original paper \cite{GG}). In the same section we also define Hamiltonian dynamics on algebroids. Section \ref{sec:results} is devoted to the formulation and proof of theorem \ref{thm:var_ham} which generalizes Livens theorem. Finally in section \ref{sec:constraints} we calculate the constrained Hamiltonian dynamics  in the vaconomic and nonholonomic case and compare our approach with other authors.

\section{General algebroids}\label{sec:alg}

\paragraph{Local coordinates, notation}
Let $M$ be a smooth manifold and let $(x^a), \ a=1,\dots,n$, be a coordinate system in $M$. We denote by 
$\tau_M: \sT M \rightarrow M$ the tangent vector bundle and by $\pi_M \colon \sT^\ast M\rightarrow M$ the
cotangent vector bundle. We have the induced (adapted) coordinate systems $(x^a, {\dot x}^b)$ in $\sT M$ and
$(x^a, p_b)$ in $\sT^\ast M$.
        More generally, let $\tau: E \rightarrow M$ be a vector bundle and let $\pi: E^\ast \rightarrow M$ be the dual bundle.
  Let $(e_1,\dots,e_m)$  be a basis of local sections of $\tau:
E\rightarrow M$ and let $(e^{1}_*,\dots, e^{m}_*)$ be the dual basis of local sections of $\pi:
E^\ast\rightarrow M$. We have the induced coordinate systems:
    $(x^a, y^i),  y^i=\iota_{e^{i}_*}$ {in} $E$, and
    $(x^a, \xi_i), \xi_i = \iota_{e_i}$ {in} $E^\ast$.

Consequently, we have natural local coordinates
\begin{align*}
    &(x^a, y^i,{\dot x}^b, {\dot y}^j )  \text{ in $\sT E$},
  &&(x^a, \xi_i, {\dot x}^b, \dot{\xi}_j) \text{ in $\sT E^\ast$},\\
  &(x^a, y^i, p_b, \pi_j) \text{ in $\sT^\ast E$},
    &&(x^a, \xi_i, p_b, \varphi^j) \text{ in $\sT^\ast E^\ast $} .
\end{align*}
For a curve $\gamma(t)\sim x^a(t)$ in $M$ we will denote its tangent prolongation by $\T(\gamma(t))\sim(x^a(t),\dot x^b(t))$ in $\T M$. 

\paragraph{General algebroids} An \emph{algebroid structure} on a vector bundle $\tau:E\lra M$ is given by a bilinear bracket $[\cdot,\cdot]$ on the space $\Sec(E)$ of (local) sections of $\tau$, together with a pair of vector bundle morphisms $\rho,\sigma:E\lra \sT M$ (left and right anchor maps) such that the compatibility condition 
\begin{equation}\label{eqn:lieb_rule}
[g\cdot X,f\cdot Y]=gf[X,Y]+g\cdot\rho(X)(f)Y-f\cdot\sigma(Y)(g)X
\end{equation}
is satisfied for every $X,Y\in\Sec(E)$ and $f,g\in C^\infty(M)$. 

In literature one considers often certain subclasses of algebroids. Under additional assumption  that the bracket is skew-symmetric (hence $\rho=\sigma$) we speak of \emph{skew-algebroids}. A skew-algebroid whose anchor map is an algebroid morphism, i.e. 
\begin{equation}\label{eqn:ala}
\rho\left([X,Y]\right)=[\rho(X),\rho(Y)]_{\sT M}
\end{equation}
is called \emph{almost-Lie algebroids}. Almost-Lie algebroids  with the bracket  satisfying the Jacobi identity (in other words: if the pair ($\Sec(E)$,$[\cdot,\cdot]$) is a Lie algebra) are called \emph{Lie algebroids}.  

In the context of mechanics it is convenient to think about an algebroid over $M$ as a generalisation of the tangent bundle of $M$. An element $a\in E$ has an interpretation of a generalized velocity with actual velocity $v\in\sT M$ obtained by applying the left anchor map $v=\rho(a)$.

In local coordinates $(x^a,y^i)$, as introduced at the beginning of this section, the structure of an algebroid on $E$ can be described in terms of function $\rho^a_i(x)$, $\sigma^a_i(x)$ and $c^i_{jk}(x)$ on $M$ given by
$$\rho(e_i)=\rho^a_i(x)\pa_{x^a},\quad\sigma(e_i)=\sigma^a_i(x)\pa_{x^a}\quad \text{and} \quad [e_i,e_j]=c^k_{ij}(x)e_k.$$

\paragraph{Examples}
\subparagraph{The tangent bundle $\sT M$.}
Every tangent bundle $\tau_M:\T M\ra M$ of a manifold $M$ posses a natural structure of a Lie algebroid. We simply take the anchor map to be $\id_{\T M}:\T M\ra\T M$ and the standard Lie bracket of vector fields on $M$ as the algebroid bracket.
\subparagraph{The Lie algebra $\g$ of a Lie group $G$.}
A Lie algebra $\g$ understood as a vector bundle over a single point also posses the structure of a Lie algebroid. The anchor map is in this case trivial, whereas the algebroid bracket is simply the Lie bracket $[\cdot,\cdot]_\g$. 

\subparagraph{The Atiyah algebroid $\T P/G$.}
An example generalizing the above two arises from the study of $G$-invariant vector fields on the total space of a principal $G$-bundle $G\ra P\overset{\pi}\ra M$. Every such field can be canonically identified with a section of the quotient bundle $E=\T P/G\ra M$. Since a Lie bracket of two $G$-invariant vector fields is again $G$-invariant, $\Sec(E)$ is equipped with a natural Lie algebra structure $(\Sec(E),[\cdot,\cdot]_E)$. This, together with the natural anchor map $\rho:E=\T P/G\ra\T M$ defined as $\rho([X])=\pi_\ast X$, forms a Lie algebroid. It is known as the \emph{Atiyah algebroid}.

The above are examples of Lie algebroids. Natural examples of skew-algebroids appeared first in the context of nonholonomic constraints (cf. for instance \cite{GLMM}), while the definition of a general algeboid was introduced in \cite{GU}, where some natural examples are given.

The structure of an algebroid has several equivalent descriptions. We will give a few of them which are particularly useful in variational calculus -- compare \cite{GG,GGU}. 

\paragraph{Algebroids as linear (2,0)-tensors on $E^\ast$}

The algebroid structure on $E$ corresponds to a presence of a linear (2,0)-tensor field $\Pi$ on $E^\ast$. In local coordinates 
$$\Pi=c^k_{ij}(x)\xi_k\pa_{\xi_i}\otimes\pa_{\xi_j}+\rho^b_i(x)\pa_{\xi_i}\otimes \pa_{x^b}-\sigma^b_i(x)\pa_{x_b}\otimes\pa_{\xi_i}.$$
The linearity of $\Pi$ means that the natural morphism $\iota_\Pi:\T^\ast E^\ast\ra\T E^\ast$ induced by the contraction is a morphism of double vector bundles.
\begin{equation}
\xymatrix{
 & \sT^\ast E^\ast \ar[rrr]^{\iota_\Pi} \ar[dr]^{T^\ast\pi}
 \ar[ddl]_{\pi_{E^\ast}}
 & & & \sT E^\ast\ar[dr]^{\sT\pi}\ar[ddl]_/-20pt/{\tau_{E^\ast}}
 & \\
 & & E\ar[rrr]^/-20pt/{\rho}\ar[ddl]_/-20pt/{\tau}
 & & & \sT M \ar[ddl]_{\tau_M}\\
 E^\ast\ar[rrr]^/-20pt/{id}\ar[dr]^{\pi}
 & & & E^\ast\ar[dr]^{\pi} & &  \\
 & M\ar[rrr]^{id}& & & M &
}\label{eqn:pi_diag}
\end{equation}
For Lie algebroids $\Pi$ is a linear Poisson structure on $E^\ast$. It is well recognised in two standard examples: if $E=\T M$, $\Pi$ is a canonical Poisson structure on $\T^\ast M$, whereas if $E=\g$ it is a Lie-Poisson structure on $\g^\ast$.

\paragraph{Algebroids as double-vector-bundle morphisms}
Composing double vector bundle morphism $\iota_\Pi$ with a canonical double vector bundle isomorphism (and antisymplectomorphism) $R_\tau:\T^\ast E\approx\T^\ast E^\ast$  one obtains double vector bundle morphism $\eps_E:\T^\ast E\ra \T E^\ast$. The presence of such a morphism provides an equivalent characterisation of the algebroid structure on $E$. 
\begin{equation}
\xymatrix{
 & \sT^\ast E \ar[rrr]^{\epsilon_E} \ar[dr]^{\pi_E}
 \ar[ddl]_{\sT^\ast\tau}
 & & & \sT E^\ast\ar[dr]^{\sT\pi}\ar[ddl]_/-20pt/{\tau_{E^\ast}}
 & \\
 & & E\ar[rrr]^/-20pt/{\rho}\ar[ddl]_/-20pt/{\tau}
 & & & \sT M \ar[ddl]_{\tau_M}\\
 E^\ast\ar[rrr]^/-20pt/{id}\ar[dr]^{\pi}
 & & & E^\ast\ar[dr]^{\pi} & &  \\
 & M\ar[rrr]^{id}& & & M &
}\label{eqn:eps_diag}
\end{equation}
Since in local coordinates $R_\tau$ is given by
$$ R_\tau(x^a, y^i,p^b, \pi_j)=(x^a, \pi_i, -p_b,y^j),$$
the morphism $\eps_E$ is of the form
\begin{equation}
\label{eqn:eps_coord}
\eps_E(x^a,y^i,p_b,\xi_j) = (x^a, \xi_i, \rho^b_k(x)y^k, c^k_{ij}(x) y^i\xi_k + \sigma^a_j(x) p_a).
\end{equation}

\paragraph{Relation $\bm{\kappa}$}
Since the dual bundles of $\pi_E:\sT^\ast E\ra E$ and $\sT\pi:\sT E^\ast\ra\sT M$ are, respectively,
$\tau_E:\sT E\ra E$ and $\sT\tau:\sT E\ra\sT M$, the dual to $\eps_E$ is a relation $\kappa_E:\sT E\rel\sT E$.
It is a uniquely defined smooth submanifold $\kappa$ in $\sT E\times\sT E$ consisting of pairs $(v,v')$ such that
$\rho(\tau_E(v'))=\sT\tau(v)$ and
$$\< v,\eps_E(v^\ast)>_{\sT\tau}=\< v',v^\ast>_{\tau_E}$$
for any $v^\ast\in\sT^\ast_{\tau_E(v')}E$, where $\< \cdot,\cdot>_{\sT\tau}$ is the canonical pairing
between $\sT E$ and $\sT E^\ast$, and $\< \cdot,\cdot>_{\tau_E}$ is the canonical pairing between $\sT E$
and $\sT^\ast E$. We will write $\kappa:v\rel\, v'$ instead of $(v,v')\in\kappa$. This relation can be put into the
following diagram of "double vector bundle relations"
\begin{equation}
\xymatrix{
 & \sT E  \ar[dr]^{\tau_E}
 \ar[ddl]_{\sT\tau}
 & & & \sT E\ar @{->}[lll]_{\kappa}\ar[dr]^{\sT\tau}\ar[ddl]_/-20pt/{\tau_{E}}
 & \\
 & & E\ar[rrr]^/-20pt/{\rho}\ar[ddl]_/-20pt/{\tau}
 & & & \sT M \ar[ddl]_{\tau_M}\\
 \sT M\ar[dr]^{\tau_M}
 & & & E\ar[dr]^{\tau}\ar[lll]_{\sigma} & &  \\
 & M\ar[rrr]^{id}& & & M &
}\label{eqn:kappa}
\end{equation}
In local coordinates it reads
\begin{equation}
\label{eqn:kappa_coord}\kappa:\left(x^a,\,{Y}^i,\,\rho^b_k(x)y^k,\,\dot{Y}^j\right)\rel\left(x^a,y^i,\,\sigma^b_k(x)Y^k,\,
\dot{Y}^j+c^j_{kl}(x)y^kY^l\right)\,.
\end{equation}

A canonical example of a mapping $\eps_E$ in the case of $E=\sT M$ is given by $\eps_{\T M} = \alpha^{-1}_M$ -- the
inverse to the Tulczyjew isomorphism $\alpha_M:\sT\sT^\ast M\ra\sT^\ast\sT M$ \cite{Tul}. The dual relation is in this case the well-known ``canonical flip'' $\kappa_{M}:\sT\sT M\ra\sT\sT M$. Since $\alpha_M$ is an isomorphism,
$\kappa_{M}$ is a true map, in fact -- an isomorphism of the corresponding two vector bundle structures as well. In local coordinates $(x^a,\dot x^b,\del x^c,\del\dot x^d)$ on $\T\T M$, $\kappa_M$ maps an element $(x^a,\dot x^b,\del x^c,\del\dot x^d)$ to $(x^a,\del x^b,\dot x^c,\del\dot x^d)$.


\section{An abstract approach to variational calculus}\label{sec:var_calc}

\paragraph{Variational problems} A standard data in a variational problem consists of a vector bundle $\tau:F\ra M$ (perhaps with additional structure) and a function $L:F\ra\R$  (\emph{Lagrangian}). Given any smooth path $\gamma:I=[t_0,t_1]\ra F$ we can define the \emph{action}
$$S_L(\gamma):=\int_IL(\gamma(t))\dd t.$$ 
In typical situations we do not want to consider $S_L$ for all $\gamma\in C^\infty(I,F)$, but instead we restrict our attention to a certain subset $\TT\subset C^\infty(I,F)$. Curves in $\TT$ will be called \emph{admissible trajectories}. 
As an example consider the tangent bundle $\tau_{M}:\T M\ra M$. Typically we are interested only in curves $\gamma:I\ra \T M$ which are tangent prolongations of curves in $M$, i.e. $\gamma(t)=\T\left(\tau_{M}\circ\gamma(t)\right)$.  

A \emph{variation} (or a \emph{virtual displacement}) of $\gamma\in\TT$ is a curve $\del \gamma:I\ra\T F$ which projects to $\gamma$ under $\tau_F:\T F\ra F$. The set of all variations will be denoted by $\C$. We will use a symbol $\C_\gamma$ for the set of all variations which project to $\gamma$. Among elements of $\C$ we can distinguish a class $\C^0$ of variations with \emph{vanishing end-points}. Denote by $\C^0_\gamma$ the common part $\C^0\cap\C_\gamma$. Again, sometimes we will like to restrict our attention to \emph{admissible variations} only, that is variations from a certain subset $\WW\subset \C$. We will use a natural notation $\WW^0:=\WW\cap\C^0$, $\WW_\gamma:=\WW\cap\C_\gamma$ and $\WW^0_\gamma:=\WW\cap\C^0_\gamma$.  

For a given $\gamma\in\TT$ and $\del\gamma\in\WW_\gamma$ we can consider the \emph{variation of the action} $S_L$  at $\gamma$, along $\del\gamma$
$$\<\del\gamma,\dd S_L(\gamma)>:=\int_I\<\del\gamma(t),\dd L(\gamma(t))>\dd t.$$ 
We would like to understand a \emph{variational problem} as the study of the critical points of the action $S_L$ restricted to curves in $\TT$ and variations in $\WW$. The variational problem can be hence defined as a quadruple 
\begin{equation}\label{def:var_prob}
\PP:=(F,L,\TT,\WW).
\end{equation} 
Of course, in typical situations the sets $\TT$ and $\WW$ are not arbitrary, but are somehow related to the geometry of the bundle $F$. 

A curve $\gamma \in\TT$ will be called an \emph{extremal} of $\PP$ iff
$$\<\del\gamma,\dd S_L(\gamma)>=0\quad\text{for every $\del\gamma\in\WW_\gamma^0$},$$
i.e. $\gamma$ is a critical point of the action $S_L$ relative to admissible variations with vanishing end-points. The set of all extremals of $\PP$ will be denoted by $\Gamma_\PP\subset\TT$. 

\paragraph{Constraints} The above definition of the variational problem is very general and particularly useful in the context of constrains. We can define $\PP^{'}$ -- a \emph{constrained variational problem} by restricting the set of admissible trajectories and/or admissible variations, i.e. $\PP^{'}=(F,L,\TT^{'},\WW^{'})$, where $\TT^{'}\subset\TT$ and $\WW^{'}\subset\WW$. In this way we follow the ideas of Tulczyjew \cite{Tul1}, namely that constraints should be imposed rather on virtual displacements than on configurations. Usually these restrictions are somehow related to additional geometric structures in the bundle $F$. We will see concrete examples while studying the nonholonomic and vaconomic  constraints associated to a subbundle $D$ of an algebroid $E$. 

\section{Lagrangian and Hamiltonian dynamics on algebroids}\label{sec:lag_ham_alg}

In section \ref{sec:var_calc} we defined an abstract variational problem as a quadruple, consisting of a vector bundle, a Lagrangian function and the sets of admissible trajectories and admissible variations.

Now we will show that if the considered vector bundle is equipped in a general algebroid structure, then the sets of admissible trajectories and variations can be defined in a natural way.

\paragraph{The algebroid variational problem $\bm{\PP_E}$}
Consider a vector bundle $\tau:E\ra M$ with a structure of a general algebroid. Among the curves $\gamma:[t_0,t_1]\ra E$ we distinguish \emph{admissible paths} (or \emph{$E$-paths} shortly), such that the tangent prolongation $\T(x(t))$ of their base curves $x(t):=\tau\circ\gamma(t)$ coincide with their left anchor
\begin{equation}\label{eqn:E_path}
\T(x(t))=\rho(\gamma(t)).
\end{equation}
The set of all such curves will be denoted by $\TT_E$.

Let now $\gamma(t)\in E$ be an $E$-path over $x(t)\in M$ and let $b(t)\in E$ be any other path covering $x(t)$. The tangent prolongation $\T(b(t))$ is a vector field in $\T E$ along $b(t)$. There exists an unique vector field $\del_b\gamma(t)$ in $\T E$ along $\gamma(t)$ (a variation of $\gamma(t)$) which is $\kappa$-related to $\T(b(t))$ 
(cf. \cite{GG} for details). In local coordinates in $\T E$, if $\gamma(t)\sim(x^a(t),y^i(t))$ and $b(t)\sim(x^a(t),b^i(t))$, then
\begin{equation}\label{eqn:var_alg}
\del_b\gamma(t)\sim\left(x^a(t), y^i(t),\sigma^c_i(x)b^i(t), \dot{b}^k(t)+c^k_{ij}(x)y^i(t)b^j(t)\right).
\end{equation}
The set of variations of the above form will be denoted by $\WW_E$. We may say that the variation $\del_b\gamma(t)$ is generated by $b(t)$ -- the \emph{infinitesimal variation}. The subset $\WW^0_E\subset\WW_E$ will consist of variations generated by $b(t)$'s vanishing at the end-points.

Now, for a given Lagrangian function $L:E\ra\R$ we can consider an \emph{algebroid variational problem}
$$\PP_E:=(E,L,\TT_E,\WW_E).$$

Admissible paths and admissible variations have a nice geometric interpretation in the context of a Lie grupoid -- Lie algebroid reduction. Namely, admissible paths are paths in a grupoid reduced to the algebroid and admissible variations are infinitesimal homotopies in a grupoid reduced to the algebroid. For a detailed discussion the reader should confront \cite{CF,GJ}. Let us mention only that the standard variational problems are of the above type. 

\paragraph{Classical variational problem on $\bm{\sT M}$} Consider the standard (unconstrained, with fixed end-points) variational problem of finding the extremal $x(t)\in M$ of the action
$$\int_{t_0}^{t_1}L(x(t),\dot x(t))\dd t,$$
where $L:\T M\ra\R$. This problem is, in fact, a variational problem on a vector bundle $\tau_{M}:\T M\ra M$ with admissible trajectories being just the tangent prolongations of the base curves 
$$\gamma(t)\sim\left(x(t),v(t)\right)\in\T M,\quad \text{where $v(t)=\dot x(t)$}.$$
Admissible variations are defined by means of base homotopies 
$$\del\gamma(t)\sim\left(\del x(t),\del v(t)\right)=\left(\pa_s x(t,s)|_{s=0}, \pa_s \dot x(t,s)|_{s=0}\right),$$
where $x(t,s)$ is the homotopy in $M$ and $x(t,0)=x(t)$. In fact every such variation is generated by a curve $b(t)=\pa_s x(t,s)|_{s=0}\in\T_{x(t)}M$. Indeed, 
$$\left(\del x,\del v(t)\right)=\left(b(t),\kappa_{M}(\dot b(t))\right),$$
where $\kappa_{M}:\T\T M\ra\T\T M$ is the canonical flip which maps $\pa_s(\pa_t x(t,s))$ into $\pa_t\left(\pa_sx(t,s)\right)$. The presence of $\kappa_{M}$ on $\T M$ is often not recognised, being hidden under the obvious symmetry of second derivatives. In local coordinates $(x^a,\dot x^b)$ on $\T M$, if $b(t)\sim(x^a(t),b^c(t))$, then
$$\del x^a(t)=b^a(t)\quad \text{and}\quad \del\dot x^c=\dot b^c(t).$$ 
We recognise equation \eqref{eqn:E_path} and \eqref{eqn:var_alg} for the standard algebroid structure on $\T M$ (in coordinates $(x^a,\dot x^b)$ on $\T M$ we have clearly $\rho^a_c(x)=\sigma^a_c(x)=\del^a_c$ and $c^a_{de}(x)=0$).

\paragraph{Euler-Lagrange equations}
For the algebroid variational problem $\PP_E=(E,L,\TT_E,\WW_E)$ one can derive an analog of Euler-Lagrange equations in a simple manner \cite{GGU,GG}. Take $\gamma\in\TT_E$ and $\del_b\gamma\in(\WW_E)_\gamma$. Let us calculate
\begin{equation}\label{eqn:var_lag_alg}
\<\del_b\gamma,\dd S_L(\gamma)>=\int_{t_0}^{t_1}\<\del_b\gamma(t),\dd L(\gamma(t))>\dd t.
\end{equation}
In local coordinates $(x^a,y^i)$ on $E$ denote $\gamma(t)\sim(x^a(t),y^i(t))$ and $b(t)\sim(x^a(t),b^i(t))$.  Now
\begin{align*}
\<\del_b\gamma(t),\dd L(\gamma(t))>&=\sigma^a_i(x)b^i(t)\frac{\pa L}{\pa x^a}(\gamma(t))+\left(\dot b^i+c^i_{jk}(x)y^j(t)b^k(t)\right)\frac{\pa L}{\pa y^i}(\gamma(t))=\\
&=b^i(t)\left[\sigma^a_i(x)\frac{\pa L}{\pa x^a}(\gamma(t))+c^k_{ji}(x)y^j(t)\frac{\pa L}{\pa y^k}(\gamma(t))-\frac{\dd}{\dd t}\frac{\pa L}{\pa y^i}(\gamma(t))\right]+\frac{\dd}{\dd t}\left[b^i(t)\frac{\pa L}{\pa y^i}(\gamma(t))\right]=\\
&=b^i\, \del L_i(\gamma(t))+\frac{\dd}{\dd t}\left[b^i(t)(\lambda_L)_i(\gamma(t))\right]
,\end{align*}
where $(\lambda_L)_i(x,y):=\frac{\pa L}{\pa y^i}(x,y)$ and $\del L_i(x,y):=\sigma^a_i(x)\frac{\pa L}{\pa x^a}(x,y)+c^k_{ji}(x)y^j\frac{\pa L}{\pa y^k}(x,y)-\frac{\dd}{\dd t}\frac{\pa L}{\pa y^i}(x,y)$.
 
Consequently, 
$$\<\del_b\gamma(t),\dd S_L(\gamma(t))>=\<b(t),\lambda_L(\gamma(t))>|_{t_0}^{t_1}+\int_{t_0}^{t_1}\<b(t),\del L(\gamma(t))>\dd t.$$
Now, $\gamma(t)$ is an extremal of $\PP_E$, if the above expression should be zero for all $b$'s vanishing at the end-points. This gives $\del L(\gamma(t))=0$. In local coordinates $\gamma(t)\sim(x^a(t),y^i(t))$ that reads
\begin{align}\label{eqn:EL}
\frac{\dd}{\dd t}\frac{\pa L}{\pa y^i}(x(t),y(t))&=\sigma^a_i(x(t))\frac{\pa L}{\pa x^a}(x(t),y(t))+c^k_{ji}(x(t))y^j(t)\frac{\pa L}{\pa y^k}(x(t),y(t))
\intertext{which, together with an admissibility condition for $\gamma(t)$} 
\label{eqn:EL1} \dot x^a(t)&=\rho^a_i(x(t))y^i(t),
\end{align} 
generalises the Euler-Lagrange equations. 

The above equations can be also obtained in a purely geometrical way using the diagram \eqref{eqn:eps_diag} (confront \cite{GGU,GG}). Given a Lagrangian function $L:E\ra\R$ one can construct two maps $\lambda_L=\tau_{E^\ast}\circ\dd L:E\ra E^\ast$ and $\Lambda_L=\eps_E\circ\dd L:E\ra \T E^\ast$ as shown on a diagram
 $$\xymatrix{
\T^\ast E  \ar[rr]^{\eps_E}&& \sT E^\ast \ar[d]^{\tau_{E^\ast}} \\
E\ar[rr]^{\lambda_L}\ar[u]^{\dd L} \ar@{-->}[urr]^{\Lambda_L} && E^\ast }.
$$ 
In local coordinates (cf. \eqref{eqn:eps_coord})
\begin{align*}
&\lambda_L(x,y)\sim\left(x^a,\frac{\pa L}{\pa y^i}(x,y)\right),\\
&\Lambda_L(x,y)\sim\left(x^a,\frac{\pa L}{\pa y^i}(x,y),\rho^b_i(x)y^k,c^k_{ij}(x)y^i\frac{\pa L}{\pa y^k}(x,y)+\sigma^a_j(x)\frac{\pa L}{\pa x^a}(x,y)\right).
\end{align*}
Now, Euler-Lagrange equations \eqref{eqn:EL} and  \eqref{eqn:EL1} are simply equivalent to $\Lambda_L(\gamma(t))$ being the tangent prolongation of $\lambda_L(\gamma(t))$
$$\T(\lambda_L(\gamma(t)))=\Lambda_L(\gamma(t)).$$

\paragraph{Vaconomic constraints}
Let now $E$ be an algebroid and $D\subset E$ be a vector subbundle (over a full base $M$ -- compare remark \ref{rem:base}). To define a \emph{vaconomically constrained variational problem} on an algebroid $E$, associated with the bundle $D$, we restrict our attention to admissible paths in $D$ and variations tangent to $D$. That is,
\begin{align*}
&\TT_{D,vac}:=\{\gamma\in \TT_E:\gamma(t)\in D \text{ for every $t\in[t_0,t_1]$}\} \text{ and}\\
&\WW_{D,vac}:=\{\del_b\gamma\in\WW_E: \del_b\gamma(t)\in\T_{\gamma(t)}D \text{ for every $t\in[t_0,t_1]$}\}.
\end{align*}  
In the standard case $E=\T M$, this variations correspond simply to infinitesimal homotopies of paths tangent to the distribution $D$. Note that if $\del_b\gamma\in\WW_{D,vac}$, we have a priori no control of the infinitesimal variation $b(t)$. Since the variations are tangent to $D$, the vaconomic dynamics is determined by the restriction of $L$ to $D$. 

\paragraph{Nonholonomic constraints}
With the same subbundle $D\subset E$ one can associate also a \emph{nonholonomically constrained variational problem}. In this situation we restrict ourselves again to admissible trajectories in $D$
$$\TT_{D,nh}=\TT_{D,vac},$$
but input different restrictions on the set of admissible variations. Namely, according to d'Alembert principle, we consider only these variations $\del_b\gamma\in\WW_E$ which are generated by infinitesimal variations $b(t)$ with values in $D$
$$\WW_{D,nh}=\{\del_b\gamma\in\WW_E:b(t)\in D \text{ for every $t\in[t_0,t_1]$}\}.$$
In this case we have a full control of the set of infinitesimal variations $b(t)$. Note, however, that the variations $\del_b\gamma(t)$ are, in general, no longer tangent to $D$. For this reason the knowledge of $L|_D$ is not sufficient to  study nonhonomically constrained dynamics. 

In cases of both, nonholonomic and vaconomic constraints associated with $D$, the study of action \eqref{eqn:var_lag_alg} with admissible variations restricted to $\WW_{D,nh}$ and $\WW_{D,vac}$, respectively, allows one to derive the algebroid version of the constrained Euler-Lagrange equations on $E$ (confront \cite{GG}) which generalizes the standard constrained dynamics on $\T M$. 

\paragraph{Hamiltonian dynamics} 
As we have already mentioned, an algebroid structure on $E$ is equivalent to a presence of a (2,0)-linear tensor field $\Pi_E$ on the dual bundle $E^\ast$ (linear Poisson bivector field for Lie algebroids). Such a structure allows one to define a natural analog of the Hamiltonian dynamics on $E^\ast$.
 
Namely, for a given function $H:E^\ast\ra \R$ we can  define a Hamiltonian vector field $\X_H$ in $E^\ast$ by a natural formula $\X_H:=\iota_{\dd H}\Pi$. Explicitly, the associated dynamics of $\X_H$ is governed by the following equations
\begin{equation}\label{eqn:ham}
\begin{split}
&\dot \xi_i=\xi_k c^k_{ij}(x)\frac{\pa H}{\pa \xi_j}(x,\xi)-\sigma^a_i(x)\frac{\pa H}{\pa x^a}(x,\xi),\\
&\dot x^a=\rho^a_i(x)\frac{\pa H}{\pa \xi_i}(x,\xi).
\end{split}
\end{equation}

As in the case of the Euler-Lagrange equations it is convenient to describe the construction of $\X_H$ by means of a diagram \eqref{eqn:pi_diag}
 $$\xymatrix{
\T^\ast E^\ast  \ar[rr]^{\iota_\Pi}&& \sT E^\ast \ar[d]^{\tau_{E^\ast}} \\
E^\ast\ar[rr]^{\id}\ar[u]^{\dd H} \ar@{-->}[urr]^{\X_H} && E^\ast }.
$$

In the next chapter we will derive Hamilton equations \eqref{eqn:ham} by means of a variational principle in the sense introduced in previous sections. What is more, the general approach to constraints presented before allows one to derive Hamiltonian  equations on $E^\ast$ associated with constraints in the configuration bundle $E$.

\section{Variational description of Hamiltonian dynamics}\label{sec:results}

\paragraph{Geometric setting}
Let us introduce a few definitions. We will develop a variational calculus on a bundle $F:=E\oplus_ME^\ast\ra M$. The tangent bundle $\T F$ of $F$ can be canonically identified with $\T F=\T(E\oplus_ME^\ast)\approx\T E\oplus_{\T M}\T E^\ast$. Both $F$ and $\T F$ admit natural projections which will be denoted by $\p_E:F\ra E$, $\p_{E^\ast}:F\ra E^\ast$, $\p_{\T E}:\T F\ra\T E$ and $\p_{\T E^\ast}:\T F\ra \T E^\ast$ respectively. Finally, we have natural coordinates $(x^a,y^i,\xi_j)$ on $F$ and $(x^a,y^i,\xi_j,\dot x^b,\dot y^k,\dot\xi_l)$ on $\T F$.

For $F$ as above we can construct sets of admissible trajectories $\TT_F$ and admissible variations $\WW_F$ as follows 
\begin{align*}
\TT_F&:=\left\{\Gamma:[t_0,t_1]\ra F:\p_E\circ\Gamma\in\TT_E\right\},\\
\WW_F&:=\left\{\del\Gamma:[t_0,t_1]\ra\T F:\p_{\T E}\circ\del\Gamma\in\WW_E\right\}.
\end{align*}
That is, admissibility is determined by the $E$ or $\T E$ part of a path or a variation only. We will distinguish the class $\WW^0_F$ consisting of those variations $\del F$ which project to $\WW^0_E$ under $\p_{\T E}$. 

For a given function $H:E^\ast\ra\R$ (Hamiltonian) we may define a new (Lagrangian) function $\LL_H:F\ra\R$ by the formula 
$$\LL_H(X):=\<\p_EX,\p_{E^\ast}X>-H(\p_{E^\ast}X).$$
It defines an action on the space of admissible trajectories in $F$  
$$S_{\LL_H}(\Gamma)=\int_{t_0}^{t_1}\<\p_E\circ\Gamma(t),\p_{E^\ast}\circ\Gamma(t)>-H(\p_{E^\ast}\circ\Gamma(t))\dd t.$$ 
Now we have all ingredients (namely $F$, $\TT_F$, $\WW_F$ and $\LL_H$) necessary to develop variational calculus on $F$ in the sense of section \ref{sec:var_calc}.

\begin{thm}\label{thm:var_ham}
The trajectories $\xi(t)\in E^\ast$ of the Hamiltonian vector field $\X_H$ are precisely the $E^\ast$-projections of the extremals $\Gamma$ of the action $S_{\LL_H}(\cdot)$ with respect to variations in $\WW^0_F$.   

For critical $\Gamma(t)$ which projects to $\xi(t)\in E^\ast$, the $E$-projection is simply $\p_E(\Gamma(t))=\T^\ast\pi\circ\dd H(\xi(t))\in E$. 
\end{thm}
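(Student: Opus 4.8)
The plan is to run the variational computation in the local coordinates $(x^a,y^i,\xi_j)$ on $F$, in direct parallel with the Euler--Lagrange derivation \eqref{eqn:EL}--\eqref{eqn:EL1}, but now for the Lagrangian $\LL_H(x,y,\xi)=y^i\xi_i-H(x,\xi)$. A trajectory $\Gamma\in\TT_F$ reads $\Gamma(t)\sim(x^a(t),y^i(t),\xi_j(t))$, and the admissibility $\p_E\circ\Gamma\in\TT_E$ amounts to the single constraint $\dot x^a=\rho^a_i(x)y^i$. The key structural observation is that an admissible variation $\del\Gamma\in\WW_F$ is encoded by \emph{two independent} pieces of data: an infinitesimal variation $b(t)$ which generates the $\T E$-part through \eqref{eqn:var_alg}, so that $\del x^a=\sigma^a_i(x)b^i$ and $\del y^k=\dot b^k+c^k_{ij}(x)y^ib^j$, together with a completely free curve $\del\xi_l(t)$ in the $\T E^\ast$-part (the two components of $\T F=\T E\oplus_{\T M}\T E^\ast$ must share the base variation $\del x^a$, but $\del\xi$ is otherwise unconstrained). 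For $\del\Gamma\in\WW^0_F$ only $b$ is required to vanish at the endpoints.

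First I would expand $\<\del\Gamma,\dd\LL_H>$ using $\pa\LL_H/\pa y^k=\xi_k$, $\pa\LL_H/\pa x^a=-\pa H/\pa x^a$ and $\pa\LL_H/\pa\xi_i=y^i-\pa H/\pa\xi_i$, and substitute the expressions for $\del x^a,\del y^k,\del\xi_l$. The only time derivative of $b$ occurs in $\xi_i\dot b^i$; integrating this term by parts (the boundary contribution vanishes since $b$ vanishes at the endpoints for $\WW^0_F$, while $\del\xi$ enters algebraically and needs no endpoint condition) yields
\[
\<\del\Gamma,\dd S_{\LL_H}(\Gamma)>=\int_{t_0}^{t_1}\left[\left(y^i-\frac{\pa H}{\pa\xi_i}\right)\del\xi_i+\left(\xi_kc^k_{ji}y^j-\sigma^a_i\frac{\pa H}{\pa x^a}-\dot\xi_i\right)b^i\right]\dd t.
\]

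The crux is then that $\del\xi_i(t)$ and $b^i(t)$ are independent arbitrary variations, so extremality over $\WW^0_F$ is equivalent to the simultaneous vanishing of both coefficients. The $\del\xi$-coefficient produces the Legendre (momenta--velocity) relation $y^i=\pa H/\pa\xi_i(x,\xi)$, and the $b$-coefficient produces $\dot\xi_i=\xi_kc^k_{ji}y^j-\sigma^a_i\,\pa H/\pa x^a$. Substituting the Legendre relation into the latter, and into the admissibility constraint $\dot x^a=\rho^a_iy^i$, recovers Hamilton's equations \eqref{eqn:ham} for $\xi(t)=\p_{E^\ast}\circ\Gamma(t)$; here I would take particular care with the ordering of the lower indices of the structure functions $c^k_{\cdot\cdot}$, the one place where a transposition slip is easy. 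Conversely, given a trajectory $\xi(t)$ of $\X_H$, defining $\Gamma$ by $\p_{E^\ast}\Gamma=\xi$ together with the Legendre relation yields an admissible $\Gamma\in\TT_F$ (its $E$-part is genuinely an $E$-path precisely because $\dot x^a=\rho^a_i\,\pa H/\pa\xi_i$) which is extremal by the same computation; this establishes the bijection that justifies the word ``precisely''.

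Finally, for the second assertion I would match the Legendre relation with the intrinsic map of the statement. Writing $\dd H\colon E^\ast\to\T^\ast E^\ast$ in coordinates as $(x^a,\xi_i,\pa H/\pa x^b,\pa H/\pa\xi_j)$ in the coordinates $(x^a,\xi_i,p_b,\varphi^j)$ on $\T^\ast E^\ast$, and using that the double-vector-bundle projection $\T^\ast\pi\colon\T^\ast E^\ast\to E$ of diagram \eqref{eqn:pi_diag} acts by $(x^a,\xi_i,p_b,\varphi^j)\mapsto(x^a,\varphi^j)$, one gets $\T^\ast\pi\circ\dd H(x,\xi)\sim(x^a,\pa H/\pa\xi_j)$, which is exactly $y^j=\pa H/\pa\xi_j$, i.e. $\p_E(\Gamma(t))=\T^\ast\pi\circ\dd H(\xi(t))$. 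I do not expect a hard analytic obstacle; the real difficulty is bookkeeping --- keeping the two independent families of variations ($b$ and $\del\xi$) cleanly separated, correctly attributing the $\dot x$-equation to the admissibility constraint built into $\TT_F$ rather than to the variation, and aligning index conventions so that the output coincides on the nose with \eqref{eqn:ham}.
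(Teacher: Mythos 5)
Your proposal is correct and follows essentially the same route as the paper's own proof: a local-coordinate expansion of $\<\del\Gamma,\dd S_{\LL_H}>$ with the variation split into the $b$-generated $\T E$-part and a free $\del\xi$-part, integration by parts on $\xi_i\dot b^i$, and reading off the Legendre relation and the $\dot\xi$-equation before substituting into the admissibility constraint. The only additions beyond the paper's argument are your explicit converse step justifying ``precisely'' and the coordinate verification of $\T^\ast\pi\circ\dd H$, both of which are consistent with the text.
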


\begin{proof}
For simplicity we may assume that the extremal $\Gamma(t)\sim(x^a(t),y^i(t),\xi_j(t))$ lies in a single coordinate chart of $E\oplus_ME^\ast$. 

Every admissible variation $\del \Gamma\in \WW_F$ along $\Gamma$ has the form
$$\del\Gamma(t)=\sigma^a_i(x)f^i(t)\pa_{x^a}+\left(\dot f^k+c^k_{ij}(x)y^i(t)f^j(t)\right)\pa_{y^k}+h_k(t)\pa_{\xi_k}.$$
Now, 
\begin{align*}
\<\del\Gamma,\dd S_{\LL_H}>&=\int_{t_0}^{t_1}\<\p_{\T E}\circ\del\Gamma(t),\p_{\T E^\ast}\circ\del\Gamma(t)>_{\T\tau}-\<\p_{\T E^\ast}\circ\del\Gamma(t),\dd H(\p_{E^\ast}\circ\Gamma(t))>\dd t=\\
&=\int_{t_0}^{t_1}(\dot f^k+c^k_{ij}(x)y^i(t)f^j(t))\xi_k(t)+y^k(t)h_k(t)-\sigma^a_i(x)f^i(t)\frac{\pa H}{\pa x^a}-h_k(t)\frac{\pa H}{\pa\xi_k}\dd t.
\end{align*}
Integrating $\dot f^k\xi_k$ by parts we will get 
$$\<\del\Gamma,\dd S_{\LL_H}>=f^k\xi_k\Big|_{t_0}^{t_1}+\int_{t_0}^{t_1}f^j\left(-\dot\xi_j-\sigma^a_j(x)\frac{\pa H}{\pa x^a}+\xi_k c^k_{ij}(x)y^i\right)+h_k(t)\left(y^k-\frac{\pa H}{\pa \xi_k}\right)\dd t.$$
If $\del\Gamma\in \WW_F^0$, then $f^k(t)$ vanishes at $t_0$ and $t_1$, hence 
\begin{equation}\label{eqn:var}
\<\del\Gamma,\dd S_{\LL_H}>=\int_{t_0}^{t_1}f^j\left(-\dot\xi_j-\sigma^a_j(x)\frac{\pa H}{\pa x^a}+\xi_k c^k_{ij}(x)y^i\right)+h_k(t)\left(y^k-\frac{\pa H}{\pa \xi_k}\right)\dd t.\end{equation}
The last expression vanishes for every such $\del\Gamma$ if and only if
\begin{align*}
\dot\xi_j&=\xi_k c^k_{ij}(x)y^i-\sigma^a_j(x)\frac{\pa H}{\pa x^a}(x,\xi)
\intertext{and}
y^k&=\frac{\pa H}{\pa \xi_k}(x,\xi).
\end{align*}
Consequently, since by assumptions $\p_E\circ\Gamma(t)\sim(x^a(t),y^i(t))$ is admissible, that is $\dot x^a=\rho^a_i(x)y^i$, we get
\begin{align*}
\dot x^a=&\rho^a_k(x)\frac{\pa H}{\pa \xi_k}(x,\xi),\\
\dot\xi_j=&\xi_k c^k_{ij}(x)\frac{\pa H}{\pa \xi_i}(x,\xi)-\sigma^a_j(x)\frac{\pa H}{\pa x^a}(x,\xi),
\end{align*}
in perfect agreement with \eqref{eqn:ham}. In other words, $\p_{E^\ast}\circ\Gamma(t)\sim(x^a(t),\xi_j(t))$ is a Hamiltonian trajectory of $\X_H$. What is more, $\p_{E^\ast}\circ\Gamma(t)\sim(x^a(t),\frac{\pa H}{\pa\xi_i}(x(t),\xi(t)))$, which is precisely the coordinate form of $\T^\ast\pi\circ\dd H(\xi(t))$. 
\end{proof}

\section{Hamiltonian dynamics with constraints}\label{sec:constraints}

As we have seen, Hamiltonian dynamics can be described by means of a variational principle on $F=E\oplus_ME^\ast$, provided that we impose simple condition on admissible variations $\p_{\T E}\circ\del\Gamma\in\WW_E$ and trajectories $\p_{E}\circ\Gamma\in\TT_E$. It is a natural idea to define constrained admissible variations and trajectories, associated with a subbundle $D\subset E$, in a similar manner;
\begin{align*}
\TT_{F,D,vac}&:=\{\Gamma\in\TT_F:\p_{E}\circ\Gamma\in\TT_{D,vac}\},\\
\TT_{F,D,nh}&:=\{\Gamma\in\TT_F:\p_{E}\circ\Gamma\in\TT_{D,nh}\},\\
\WW_{F,D,vac}&:=\{\del\Gamma\in\WW_F:\p_{T E}\circ\del\Gamma\in\WW_{D,vac}\},\\
\WW_{F,D,nh}&:=\{\del\Gamma\in\WW_F:\p_{T E}\circ\del\Gamma\in\WW_{D,nh}\},
\end{align*}
and to derive a constrained version of the Hamiltonian dynamics using the variational approach. It seems also possible to consider constraints in the phase part $E^\ast$ of $F$. However, since $E^\ast$ does not posses  a priori any natural structure to generate constrained variations (such as an algebroid structure), these could be only configuration constraints.

\paragraph{Nonholonomic constraints}
In this case we can calculate the associated equations of motion in a manner similar to the proof of theorem \ref{thm:var_ham}. Indeed, we can derive formula \eqref{eqn:var} again, yet now, since $\del\Gamma\in\WW_{F,D,nh}$ the curve $f^i(t)$ belongs to $D$, whereas $h_k(t)$ can be arbitrary. Consequently, we get that 
$$\dot\xi_j-\xi_k c^k_{ij}(x)y^i+\sigma^a_j(x)\frac{\pa H}{\pa x^a}(x,\xi)$$
annihilates $D$. In other words, if $D$ is given locally via implicit equations $\Phi^s(x,y)=0$, there exists functions $\mu_s(t)$ such that
$$\dot\xi_j=\xi_k c^k_{ij}(x)y^i-\sigma^a_j(x)\frac{\pa H}{\pa x^a}(x,\xi)+\mu_s(t)\frac{\pa\Phi^s}{\pa y^j}\left(x,\frac{\pa H}{\pa \xi}\right).$$
In addition, we have the condition $y^k=\frac{\pa H}{\pa \xi_k}(x,\xi)$ and the fact that the $E$-projection of the extremal $\p_E\circ\Gamma\sim(x^a,y^i)=(x^a,\frac{\pa H}{\pa\xi_i})$ should belong to $\TT_{D,nh}$. As a consequence we get the following set of differential equations
\begin{align}
\label{eqn:ham_con_nh1}&\dot\xi_j=\xi_k c^k_{ij}(x)\frac{\pa H}{\pa \xi_i}(x,\xi)-\sigma^a_j(x)\frac{\pa H}{\pa x^a}(x,\xi)+\mu_s(t)\frac{\pa\Phi^s}{\pa y^j}\left(x,\frac{\pa H}{\pa \xi}\right),\\
\label{eqn:ham_con_nh2}&\dot x^a=\rho^a_k(x)\frac{\pa H}{\pa \xi_k}(x,\xi),\\
\label{eqn:cond_nh} &\Phi^s(x,\frac{\pa H}{\pa\xi})=0.
\end{align}

The same equations (yet for the standard Lie algebroid structure on $\T M$) have been obtained for instance in \cite{SM}. In the derivations the authors assumed that the Hamiltonian comes from a regular Lagrangian via the Legendre map. In our approach no such an assumption is necessary -- we may simply start from an arbitrary Hamiltonian.
Equations \eqref{eqn:ham_con_nh1}-\eqref{eqn:cond_nh}, again for Hamiltonian given via a Legendre map of a regular Lagrangian, can be also obtained from the nonholonomically constrained Euler-Lagrange equations on algebroids - compare \cite{GG} (equations (5.10) and (5.11)).

\paragraph{Geometric approach to nonholonomic constraints}

Equations \eqref{eqn:ham_con_nh1}--\eqref{eqn:cond_nh} can be described in a purely geometric way. The construction is based on the diagram \eqref{eqn:pi_diag}. Observe first that the inclusion $i:D\ra E$ gives rise to a dual map $i^\ast:E^\ast\ra D^\ast$ and consequently to a tangent map $\T i^\ast:\T E^\ast\ra \T D^\ast$ over $i^\ast$. Composing \eqref{eqn:pi_diag} with the later we will obtain the following diagram
$$\xymatrix{
\T^\ast E^\ast  \ar[rr]^{\iota_\Pi}&& \sT E^\ast \ar[rr]^{\T i^\ast} && \T D^\ast\ar[d]^{\tau_{D^\ast}}\\
E^\ast\ar[rr]^{\id}\ar[u]^{\dd H} \ar@{-->}[urrrr]^{\X^D_H} && E^\ast \ar[rr]^{i^\ast}&& D^\ast},
$$ 
where $\X^D_H:=\T i^\ast\circ \X_H$ is a map from $E^\ast$ to $\T D^\ast$.

\begin{thm} The curve $\gamma(t)\sim(x^a(t),\xi_i(t))\in E^\ast$ satisfies the nonholonomic Hamilton equations \eqref{eqn:ham_con_nh1}--\eqref{eqn:ham_con_nh2} if and only if $\X^D_H(\gamma(t))\in \T D^\ast$ is a tangent prolongation of a curve $i^\ast\gamma(t)\in D^\ast$. Condition \eqref{eqn:cond_nh} means simply that $\T ^\ast\pi(\dd H)$ belongs to $D$. 
\end{thm}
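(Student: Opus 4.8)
The plan is to recognise this as a coordinate-free repackaging of the equations already verified in the nonholonomic derivation just above. The statement has two independent claims: first, that the curve $\gamma(t)\sim(x^a(t),\xi_i(t))$ satisfies \eqref{eqn:ham_con_nh1}--\eqref{eqn:ham_con_nh2} if and only if $\X^D_H(\gamma(t))$ is the tangent prolongation of $i^\ast\gamma(t)$; and second, that the condition \eqref{eqn:cond_nh} is the geometric statement $\T^\ast\pi(\dd H)\in D$. My approach throughout is to compute everything in the natural coordinates $(x^a,\xi_i)$ on $E^\ast$, $(x^a,\eta_s)$ on $D^\ast$ (with $\eta_s$ dual to a local frame of $D$), and the induced coordinates on $\T D^\ast$, and then match the resulting expression against the displayed equations. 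Since all the analytic content is already present in the proof of theorem \ref{thm:var_ham} and its nonholonomic variant, the proof here is essentially a translation, not a new calculation.

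First I would write down $\X^D_H = \T i^\ast\circ\X_H$ in coordinates. From the diagram \eqref{eqn:pi_diag} and the explicit form of $\Pi$, the vector field $\X_H=\iota_{\dd H}\Pi$ on $E^\ast$ has components
$$\X_H(x,\xi)\sim\left(x^a,\ \xi_i,\ \rho^a_k(x)\tfrac{\pa H}{\pa\xi_k},\ \xi_k c^k_{ij}(x)\tfrac{\pa H}{\pa\xi_i}-\sigma^a_j(x)\tfrac{\pa H}{\pa x^a}\right)$$
as an element of $\T E^\ast$, matching exactly the right-hand sides of \eqref{eqn:ham_con_nh1} (without the multiplier term) and \eqref{eqn:ham_con_nh2}. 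Next I would compute $\T i^\ast$ in coordinates: the dual inclusion $i^\ast:E^\ast\ra D^\ast$ restricts a covector on $E$ to $D$, and its tangent map $\T i^\ast$ acts on the fibre coordinates of $\T E^\ast$ by the same linear restriction applied to the $\dot\xi$-directions. The key point is that $\T i^\ast$ kills precisely the component of the $\dot\xi$-vector that annihilates $D$, i.e. the image under $\T i^\ast$ only sees the $\dot\xi_j$ equation modulo the annihilator $D^\circ$ spanned by the $\tfrac{\pa\Phi^s}{\pa y^j}$.

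Then I would carry out the matching for the first claim. Saying $\X^D_H(\gamma(t))$ is the tangent prolongation of $i^\ast\gamma(t)$ means two things: the $\dot x^a$-component of $\X^D_H$ equals $\dot x^a$ along $\gamma$, which reproduces \eqref{eqn:ham_con_nh2}; and the $\dot\eta_s$-component equals $\tfrac{\dd}{\dd t}\eta_s$ of the restricted curve, which reproduces the $\dot\xi_j$ equation \emph{only after projection by $\T i^\ast$}, that is modulo $D^\circ$. But ``$\dot\xi_j$ equals the Hamiltonian expression modulo the span of $\tfrac{\pa\Phi^s}{\pa y^j}$'' is exactly \eqref{eqn:ham_con_nh1} with the undetermined multipliers $\mu_s(t)$ absorbed, since $\mu_s(t)\tfrac{\pa\Phi^s}{\pa y^j}$ ranges over all of $D^\circ$ as $\mu_s$ varies. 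Thus the tangent-prolongation condition is equivalent to the existence of such multipliers, which is precisely \eqref{eqn:ham_con_nh1}--\eqref{eqn:ham_con_nh2}. For the second claim I would recall from theorem \ref{thm:var_ham} that $\T^\ast\pi\circ\dd H(\xi)$ is the $E$-element with coordinates $(x^a,\tfrac{\pa H}{\pa\xi_i})$; hence $\T^\ast\pi(\dd H)\in D$ reads $\Phi^s(x,\tfrac{\pa H}{\pa\xi})=0$, which is exactly \eqref{eqn:cond_nh}.

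The main obstacle I expect is being careful about the precise coordinate description of $\T i^\ast$ and its annihilator, in particular verifying that the kernel of the fibre part of $\T i^\ast$ on the $\dot\xi$-directions is exactly $D^\circ=\operatorname{span}\{\tfrac{\pa\Phi^s}{\pa y^j}\}$ and that the multiplier term in \eqref{eqn:ham_con_nh1} lives in this same annihilator. This is the one place where the geometric reformulation could conceivably drop or add information relative to the multiplier form, so it deserves an explicit check rather than an appeal to the earlier proof. Everything else is a direct transcription of the components of $\X_H$ and of the tangent-prolongation condition into the displayed equations.
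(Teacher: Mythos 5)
Your proposal is correct and follows essentially the same route as the paper: both arguments reduce the tangent-prolongation condition to the statement that $\X_H(\gamma(t))-\T(\gamma(t))$ is vertical (giving \eqref{eqn:ham_con_nh2}) with vertical part in the annihilator $D^0=\ker(i^\ast|_{\mathrm{fibre}})=\operatorname{span}\{\pa\Phi^s/\pa y^j\}$ (giving \eqref{eqn:ham_con_nh1} with the multipliers absorbed), plus the same trivial identification for \eqref{eqn:cond_nh}. The only difference is organizational -- the paper packages the computation through the auxiliary vector $\wh\del H(X):=\X_H(\tau_{E^\ast}(X))-X$ and two small lemmas about $\T i^\ast$ and verticality, while you match components of $\T i^\ast\circ\X_H$ directly -- and your flagged check on the kernel of the fibre part of $\T i^\ast$ is exactly the point the paper's lemmas address.
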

\begin{proof} For $X\in \T E^\ast$ consider a vector
$$\wh\del H(X):=\X_H(\tau_{E^\ast}(X))-X\in \T E^\ast.$$
In local coordinates if $X\sim(x^a,\xi_i,\dot x^b,\dot \xi_j)$ then
$$\wh\del H(X)\sim\left(x^a,\xi_i,\rho^b_i(x)\frac{\pa H}{\pa\xi_i}(x,\xi)-\dot x^b,\xi_k c^k_{ij}(x)\frac{\pa H}{\pa \xi_i}(x,\xi)-\sigma^a_j(x)\frac{\pa H}{\pa x^a}(x,\xi)-\dot\xi_j\right).$$
Now $X=\T(\gamma(t))$ satisfies \eqref{eqn:ham_con_nh2} if and only if $\wh\del H(X)$ is a vertical vector, whereas \eqref{eqn:ham_con_nh1} means that its vertical part $V(\wh\del H(X))=:\del H(X)\sim\left(x^a,\xi_k c^k_{ij}(x)\frac{\pa H}{\pa \xi_i}(x,\xi)-\sigma^a_j(x)\frac{\pa H}{\pa x^a}(x,\xi)-\dot\xi_j\right)$ belongs to the annihilator $D^0\subset E^\ast$. Observe that $\del H(X)\in D^0$ if and only if $i^\ast \del H(X)=0$. Finally note that since $i^\ast: E^\ast\ra D^\ast$ is the identity on the base, a vector $Y\in \T E^\ast$ is vertical if and only if $\T i^\ast Y$ is vertical. Moreover, for a vertical vector $Y\in VE^\ast$ we have $i^\ast V(Y)=V(\T i^\ast Y)$. 

Collecting all the above fact we conclude that \eqref{eqn:ham_con_nh1} and \eqref{eqn:ham_con_nh2} is equivalent to $\wh\del H(T(\gamma(t)))\in V E^\ast$ and $i^\ast V(\wh\del H(T(\gamma(t))))=0$ which is, in turn, equivalent to $\T i^\ast\left(\wh\del H(T(\gamma(t)))\right)=0$. This, by definition of $\wh\del H$ and $\X^D_H$, means that 
$$\X^D_H(\gamma(t))=\T i^\ast \X_H(\gamma(t))=\T(i^\ast\gamma(t)).$$
\end{proof}

\paragraph{ Nonholonomic reduction}

Considerations from the last paragraph suggest that the dual bundle $D^\ast$ plays the role of a phase space for the nonholonomic dynamics. This in turn leads to a natural question is it possible to formulate the dynamics purely in therms of the bundle $D^\ast$. Such a procedure is known as a \emph{nonholonomic reduction}. 

This problem for $D\subset\T M$ was considered by van der Shaft and Maschke \cite{SM}. The authors describe the nonholonomically constrained Hamiltonian dynamics on $\T^\ast M$ (for regular Hamiltonians) as an unconstrained Hamiltonian dynamics on $D^\ast$ obtained using a ''Poisson'' bracket on $D^\ast$ which needs not to satisfy the Jacobi identity. Using our terminology the structure they consider is simply a bi-tensor field on $D^\ast$ which corresponds to a general algebroid structure on $D$. 

The above results seems very interesting as they allow to cover the constrained and unconstrained dynamics within a universal algebroid formalism. Unfortunately the nonholonomic reduction of \cite{SM} has no geometric meaning. Strictly speaking the authors are unaware of using the decomposition $\T M=D\oplus D^{'}$ which depends on the choice of local coordinates. 

In some cases, however, where a natural decomposition $E=D\oplus D^{'}$ can be given, the nonholonomic reduction can be carried out. Below we will state a result of this kind, which can be understood as a Hamiltonian analog of the Lagrangian nonholonomic reduction described in \cite{GLMM}. We do not give a proof (which is quite simple) since this paragraph was intended to be informative only.

\begin{thm}
Consider a nonholonomically constrained algebroid variational problem associated with a subbundle $D\subset E$ of an algebroid $(E,[\cdot,\cdot],\rho,\sigma)$ for a Hamiltonian $H:E^\ast\ra\R$. Assume that $H$ is of mechanical type, i.e. $H(x,\xi)=\frac 12g^{-1}(\xi,\xi)+V(x)$, where $g$ is a non-degenerate symmetric bilinear form on $E$ and $V:M\ra\R$ is a smooth base function.

Let $E=D\oplus D^\perp$ and $E^\ast=D^\ast\oplus (D^\perp)^\ast$ be natural decompositions and $P:E\ra D$ a natural projection induced by $g$.

The nonholonomically constrained Hamiltonian trajectories on $E^\ast$ are the same as unconstrained Hamiltonian trajectories on $D^\ast$ derived for a Hamiltonian $H|_{D^\ast\oplus\{0\}}$ and an algebroid structure $(D,[\cdot,\cdot]_D,\rho_D,\sigma_D)$ given by the bracket $[\cdot,\cdot]_D:=P[\cdot,\cdot]$, and anchors $\rho_D:=\rho|_D$ and $\sigma_D:=\sigma|_D$. 
\end{thm}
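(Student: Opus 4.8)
The plan is to turn the statement into a coordinate computation in a suitably adapted frame and to match the resulting equations against the unconstrained Hamilton equations \eqref{eqn:ham} written for the reduced algebroid on $D^\ast$. As a preliminary step I would check that $(D,[\cdot,\cdot]_D,\rho_D,\sigma_D)$ really is an algebroid: applying the projection $P$ to the Leibniz rule \eqref{eqn:lieb_rule} for $[\cdot,\cdot]$ and using that $P$ fixes sections of $D$, one recovers \eqref{eqn:lieb_rule} for $[\cdot,\cdot]_D=P[\cdot,\cdot]$ with anchors $\rho_D=\rho|_D$, $\sigma_D=\sigma|_D$, since the two anchor terms already take values in $D$ and are unaffected by $P$.

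Next I would fix, by a Gram--Schmidt procedure, a local frame $(e_1,\dots,e_m)$ of $E$ adapted to the decomposition, with $e_1,\dots,e_r$ spanning $D$ and $e_{r+1},\dots,e_m$ spanning $D^\perp$, chosen $g$-orthogonally so that the matrix of $g$ --- and hence that of the co-metric $g^{-1}$ --- is block diagonal with blocks $g_D=g|_D$ and $g_{D^\perp}=g|_{D^\perp}$. In this frame the constraint subbundle is cut out by $\Phi^s(x,y)=y^{r+s}=0$, so $\pa\Phi^s/\pa y^j=\del^{r+s}_j$; moreover $P(e_k)=e_k$ for $k\le r$ and $P(e_k)=0$ for $k>r$, so the structure functions of the reduced bracket are exactly the $D$-block $c^k_{ij}$ with $i,j,k\le r$.

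I would then translate the mechanical constraint. For $H(x,\xi)=\tfrac12 g^{-1}(\xi,\xi)+V(x)$ the Legendre correspondence is $y^i=\pa H/\pa\xi_i=(g^{-1})^{ij}\xi_j$, and block-diagonality of $g^{-1}$ gives $y^{r+s}=(g_{D^\perp}^{-1})^{r+s,r+t}\xi_{r+t}$; since $g_{D^\perp}^{-1}$ is invertible, the constraint \eqref{eqn:cond_nh} is equivalent to $\xi_{r+1}=\dots=\xi_m=0$. Thus the constraint surface in $E^\ast$ is precisely $D^\ast\oplus\{0\}$, on which $\pa H/\pa\xi_i=0$ for $i>r$ and $\pa H/\pa\xi_i=(g_D^{-1})^{ij}\xi_j$ for $i\le r$; that is, $H|_{D^\ast\oplus\{0\}}$ is the mechanical Hamiltonian on $D^\ast$ with co-metric $(g|_D)^{-1}$, and its base derivative agrees with $\pa H/\pa x^a$ on the surface (the $D^\perp$-terms carry factors $\xi_{>r}$ and drop out).

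Finally I would substitute these data into \eqref{eqn:ham_con_nh1}--\eqref{eqn:cond_nh}. Because $\pa\Phi^s/\pa y^j=\del^{r+s}_j$, the multiplier term is absent from the equations with $j\le r$; on the constraint surface the sums over $k$ and $i$ collapse to indices $\le r$ (since $\xi_{>r}=0$ and $\pa H/\pa\xi_{>r}=0$), so \eqref{eqn:ham_con_nh1} for $j\le r$ together with the base equation \eqref{eqn:ham_con_nh2} become verbatim the unconstrained equations \eqref{eqn:ham} for $(D,[\cdot,\cdot]_D,\rho_D,\sigma_D)$ and $H|_{D^\ast\oplus\{0\}}$. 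The remaining equations $j>r$ merely solve for the multipliers $\mu_s$ that keep the trajectory on $D^\ast\oplus\{0\}$. The one point needing care --- and the crux of the argument --- is the complete decoupling from the $\xi_{>r}$ variables: it rests entirely on the block-diagonality forced by the $g$-orthogonal splitting (which kills $\pa H/\pa\xi_i$ for $i>r$ on the surface) and on the projection $P$ deleting the $D^\perp$-part of the bracket. This is exactly where the mechanical-type hypothesis on $H$ and the $g$-induced decomposition $E=D\oplus D^\perp$ are indispensable.
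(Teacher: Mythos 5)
Your argument is correct. Note, however, that the paper deliberately omits its own proof of this theorem (``We do not give a proof (which is quite simple) since this paragraph was intended to be informative only''), so there is no in-paper argument to compare against; your coordinate computation in an adapted frame is the natural way to fill that gap, and it matches the ingredients the paper supplies: the constrained system \eqref{eqn:ham_con_nh1}--\eqref{eqn:cond_nh} and the unconstrained system \eqref{eqn:ham} for the reduced structure. All the key points are in place: the Leibniz-rule check for $P[\cdot,\cdot]$ with the restricted anchors, the identification of the constraint surface $\Phi^s(x,\pa H/\pa\xi)=0$ with the annihilator $D^\ast\oplus\{0\}$ of $D^\perp$ via block-diagonality of $g^{-1}$, the collapse of the sums over $i,k$ to indices $\le r$, and the observation that the equations with $j>r$ only determine the multipliers $\mu_s$ (the coefficient matrix $\pa\Phi^s/\pa y^{r+t}=\del^s_t$ being invertible, this gives a genuine bijection between constrained trajectories on $E^\ast$ and unconstrained ones on $D^\ast$). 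Two cosmetic remarks: no Gram--Schmidt is actually needed, since any frame adapted to the given splitting $E=D\oplus D^\perp$ is automatically block-orthogonal for $g$ by the definition of $D^\perp$; and it is worth stating explicitly, as you implicitly use, that the upper-left block of $g^{-1}$ equals $(g|_D)^{-1}$ pointwise in $x$, so that the base derivatives $\pa H/\pa x^a$ and $\pa (H|_{D^\ast\oplus\{0\}})/\pa x^a$ agree on the constraint surface.
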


\paragraph{Vaconomic constraints}
The derivation of the constrained equations of motion is much more delicate in this case. Repeating the derivations from the proof of theorem \ref{thm:var_ham} we can again obtain \eqref{eqn:var}. We can easily conclude that $y^k=\frac{\pa H}{\pa \xi_k}(x,\xi)$, since $h_k(t)$ can be arbitrary. Now, however, we have no clear condition for $f^k(t)$ apart from $\p_{\T E}\circ\del\Gamma\in \T D$, so it is difficult to derive the second equation.

Instead, we can give a simple sufficient condition for $\Gamma$ being the extremal in this case. Observe that if $\Phi:\R\times E\ra\R$ is a function vanishing on $D$, then, since $\p_{\T E}\circ\del\Gamma$ is tangent to $D$, we have $\<\del\Gamma(t), \dd\Phi(t,\p_E\circ\Gamma(t))>=0$. Now, if for any such $\Phi$ the curve $\Gamma$ is an extremal of the unconstrained variational problem on $F$ for the new Lagrangian function $\wt\LL_H(X):=\LL_H(X)+\Phi(t,\p_EX)$, then for $\del\Gamma\in\WW^0_{F,D,vac}$ we have
$$0=\<\del\Gamma(t),\dd\wt\LL_H(\Gamma)(t)>=\<\del\Gamma(t),\dd\LL_H(\Gamma(t))+\dd\Phi(t,\p_E\circ \Gamma(t))>=\<\del\Gamma(t),\dd\LL_H(\Gamma(t))>.$$
In consequence, $\Gamma$ is an extremal of the vaconomically constrained problem. If $D$ is given by the set of implicit equations $\Phi^s(x,y)=0$, we may take $\Phi(t,x,y)$ of the form $\mu_s(t)\Phi^s(x,y)$. The equations of motion in this case read

\begin{align}
\label{eqn:ham_con_vac1}&\frac{\dd}{\dd t}\left(\xi_j+\mu_s(t)\frac{\pa\Phi^s}{\pa y^j}\right)=\left(\xi_j+\mu_s(t)\frac{\pa\Phi^s}{\pa y^k}\right) c^k_{ij}(x)\frac{\pa H}{\pa \xi_i}-\sigma^a_j(x)\left(\frac{\pa H}{\pa x^a}-\mu_s(t)\frac{\pa \Phi^s}{\pa x^a}\right),\\
\label{eqn:ham_con_vac2}&\dot x^a=\rho^a_k(x)\frac{\pa H}{\pa \xi_k}(x,\xi),
\intertext{with the additional set of conditions:}
\label{eqn:cond_vac}
&\Phi^s(x,\frac{\pa H}{\pa\xi})=0.
\end{align} 

To link the above equations to known results, note that the vaconomicaly constrained Euler-Lagrange equations on general algebroids have been derived in \cite{GG} (equations (5.3) and (5.4)). In case that the Hamiltonian $H$ comes from a regular Lagrangian $L$ via a Legendre map, equations \eqref{eqn:ham_con_vac1}--\eqref{eqn:cond_vac} can be eaisily derived from these equations.

\begin{remark}\label{rem:base}
In this paper we introduced the constraints (both nonholonomic and vaconomic) by means of a subbundle $D\subset E$ over a full base $M$. However, most of the results would remain valid also if $D$ was a subbundle over a submanifold $M_D\subset M$ if we assumed admissibility conditions $\rho(D)\subset \T M_D$ and $\sigma(D)\subset\T M_D$. We decided to work in a less general situation for notation simplicity. 
\end{remark}

\end{document}